\documentclass[12pt]{article}

\usepackage{amsmath}
\usepackage{amsthm}
\usepackage{amssymb}
\usepackage{amsfonts}
\usepackage{graphicx}
\usepackage{mathptmx}

\newcommand{\E}{\operatorname{E}}
\newcommand{\V}{\operatorname{Var}}

\renewcommand{\(}{\left(}
\renewcommand{\)}{\right)}
\renewcommand{\bar}[1]{\overline{#1}}
\newcommand{\ce}{\mathit{ce}}
\newcommand{\re}{\mathit{re}}

\newcommand{\ds}{\displaystyle}

\newtheorem{thm}{Theorem}

\theoremstyle{definition}

\newtheorem{dfn}{Definition}

\numberwithin{equation}{section}

\begin{document}

\author{Helena Jasiulewicz\footnote{Wroc{\l}aw University of Environmental and Life Sciences,
Institute of Economics  and Social Sciences,
ul.~C.~K.~Norwida 25,
50-375 Wroc{\l}aw,
Poland, e-mail: helena.jasiulewicz@up.wroc.pl} \\
\setcounter{footnote}{6}
Wojciech Kordecki\footnote{University of Business in Wroc{\l}aw,
Department of Management,
ul.~A.~Ostrowskiego 22,
53-238 Wroc{\l}aw,
Poland, email: wojciech.kordecki@handlowa.eu}
}

\title{Ruin probability of a discrete-time risk process with proportional reinsurance and investment for exponential and Pareto distributions}

\date{}

\maketitle

\begin{abstract}
In this paper a quantitative analysis of the ruin probability in finite time of a discrete risk process with proportional reinsurance and investment of financial surplus is focused on. It is assumed that the total loss on a unit interval has a light-tailed distribution -- exponential distribution and a heavy-tailed distribution -- Pareto distribution. The ruin probability for finite-horizon 5 and 10 was determined from recurrence equations. Moreover, for exponential distribution the upper bound of ruin probability by Lundberg adjustment coefficient is given. For Pareto distribution the adjustment coefficient does not exist, hence an asymptotic approximation of the ruin probability if an initial capital tends to infinity is given. Obtained numerical results are given as tables and they are illustrated as graphs.
\end{abstract}

\noindent
Keywords: \textit{discrete time risk process, ruin probability, proportional reinsurance, Lundberg's inequality, regularly varying tail}

\section{Introduction}
\label{s:intro}

In the risk theory, works concerning the financial surplus of insurance companies in a continuous time have been proceeding for nearly a century. Very advanced models  of the classical continuous risk process were established. Although such a model is more natural in the description of reality, the research on the discrete process of financial surplus is considerably more modest. The review of the results concerning the discrete process of financial surplus one can find in the paper~\cite{HJ:surplus}. This paper is one of the series of papers which try to bring closer of the classical discrete process of financial surplus to the reality of insurance companies. Namely, the analysis of the investment of financial surplus enhances the security of an insurance company. These problems are considered in the
papers~\cite{Cai:Discrete,Cai:Ruin,CaiDickson:ruin,TangTsit_:Precise,Yang:Non-exp}.
Reinsurance has a considerable influence on increasing the security of an insurance company. The results concerning a discrete risk process with investment and reinsurance can be found in~\cite{DiasparaRomera:Bounds,HJ:ruin-discr}.

In this paper we consider the ruin probability in finite time of a discrete risk process with proportional reinsurance and investment of financial surplus.
Moreover, we obtain numerical results for particular cases: exponential and Pareto distribution of a total loss and some asymptotic results.

In the paper by Cai and Dickson~\cite{CaiDickson:ruin} the ruin probability in a discrete time risk process with a Markov interest model is studied. Recursive equations for the ruin probabilities, generalised Lundberg inequalities and an approximating approach to the recursive equations are given in that paper.
Diaspara and Romera~\cite{DiasparaRomera:Bounds} introduced a proportional reinsurance in the discrete risk process with an investment.

For any reinsurance, not only proportional, Jasiulewicz~\cite{HJ:ruin-discr} obtained recursive equations and Lundberg inequality for the ruin probability in the discrete-time risk process with Markovian chain interest rate model. Moreover,  for the proportional reinsurance and the reinsurance of stop-loss an optimal level of retention was considered, assuming the maximisation of Lundberg adjustment coefficient as an optimising criterion.

This paper is a continuation of the research initiated by
Jasiulewicz~\cite{HJ:ruin-discr}. For the given theoretical results we conduct a detailed quantitative analysis for particular distributions of the total loss in a unit period and proportional reinsurance. We consider the ruin probability for a light-tailed distribution (exponential pdf) and a heavy-tailed distribution (Pareto pdf) taking into account an investment of finance surplus according to a random interest rate.
Based on these considerations we give practical conclusions concerning connections between the initial capital level and the reinsurance level.
We pointed out the level of reinsurance of a loss in order to set a ruin probability
at the level low enough to be accepted by an insurer and vice versa i.e. how high his own capital should be.

The quality of the upper bound of ruin probability in finite time with the use of Lundberg coefficient  was illustrated by the example of exponential distribution. We observe that if an insurer and a reinsurer use the same security loading  then the  adjustment coefficient as a function of the reinsurance level is convex, which considerably improves an upper estimation of the ruin probability. However, if loading of a reinsurer is greater than loading of insurer, the adjustment coefficient is not a convex function, which lowers the quality of an upper estimation.
This observation was not taken into account in the numerical examples in Diaspara and Romera~\cite{DiasparaRomera:Bounds}.

It is known that for heavy-tailed distributions Lundberg adjustment coefficient does not exist. For distributions of that type we give the theorem about the approximation of the ruin probability if the initial capital is sufficiently large. The example of Pareto distribution shows that such an approximation is appropriate  and quickly tends to the limit value.

In the paper we assume the expectation of a loss in a unit period as a monetary unit. For that reason we assume that the expected values in both considered distributions are equal to 1. For the assumed values of parameters in Pareto distribution a variance does not exist. To compare numerical results for both distributions we also take such parameters in order to obtain the same geometric means as well as geometric variances.

Concluding, below we list the new elements, ideas and results  which are introduced in this article:
\begin{enumerate}
\item
In the continuous risk process the level of retention is optimal if it minimises the ruin probability which can be determined by maximising an adjustment coefficient relative to the level of retention
(see Dickson and Waters~\cite{DicksonWaters:Reinsurance}).
Then we can pose the following natural question: does the discrete risk process  hold the same?
\item
The upper bound of the ruin probability obtained by Lundberg coefficient in the case of proportional reinsurance is  given by
Diaspara and Romera~\cite{DiasparaRomera:Bounds}. The numerical example for $\xi=\theta$ shows that this estimation is reasonable. Is that estimation also good for the more natural case $\xi>\theta$?
\item
In the case of heavy tailed claims we give the approximation of the ruin probability.
The question is: is the sequence of approximations is fast convergent for sufficiently large initial capital?
\end{enumerate}

\section{Notations and theorems}
\label{s:not-thm}
Further notations, assumptions and theorems \ref{thm:ruin-recc1} and \ref{thm:est-up1} given below come from the paper by Jasiulewicz~\cite{HJ:ruin-discr}. In that paper the following notations and assumptions were taken.

\begin{enumerate}
\item\label{ass1}
Let $Z_n$ denote the total loss in unit period $\(n-1,n\right]$. The loss is calculated at the end of each period. Let us assume that $\left\{Z_n, n=1,2,\dots\right\}$ is a sequence of independent and identically distributed random variables with a common distribution function $W\(z\)$.
\item\label{ass2}

The premium is calculated by the expected value principle with the loading factor $\theta>0$. Constant premium $c=\(1+\theta\)\E Z_n$ is paid at the end of every unit period $\(n-1,n\right]$.
\item\label{ass3}
The insurer's surplus at the moment $n$ is denoted by $U_n$ and is calculated after the payoff. The surplus $U_n$ is invested at the beginning of the period $\(n,n+1\right]$ at a random rate $I_n$.
\item\label{ass4}
Let us assume that the interest rates $\left\{I_n, n=0,1,\dots\right\}$ follow a time-homogeneous Markov chain. We further assume that for all $n=0,1,\dots$, the rate $I_n$ takes possible values $i_1,i_2,\dots,i_l$. For all $n$ and all states, the transition probability is denoted by
\[
\Pr\(I_{n+1}=i_t|I_n=i_s\)=p_{st}\geq 0
\]
and the initial distribution is denoted by
\[
\Pr\(I_0=i_s\)=\pi_s\,.
\]
\item\label{ass5}
Suppose that the insurer effects reinsurance and that the amount paid by the insurer when the loss $Z_n$ occurs
is $h\(Z_n,b\)$ where a parameter $b>0$ denotes a retention level. The meaning of the parameter $b$ will be explained in two examples
of the most frequent reinsurancies applied in the insurance practice.
\begin{enumerate}
\item
Proportional reinsurance, if a function $h\(x,b\)$ has the form
\[
h\(x,b\)=bx,
\]
where $b\in\left(0,1\right]$.
\item
Stop loss reinsurance, if a function $h\(x,b\)$ has the form
\[
h\(x,b\)=
\begin{cases}
x, & x\leq b, \\
b, & x>b,
\end{cases}
\]
where $b>0$.
\end{enumerate}

The following assumption $0\leq h\(x,b\)\leq x$ about $h$ is obvious. A part of the loss $Z_n$ retained by the insurer is denoted by $Z_n^{\ce}=h\(Z_n,b\)$ and its distribution function by $V\(z\)$. Therefore $Z_n^{\re}=Z_n-Z_n^{\ce}$ is a reinsured part of the loss $Z_n$.

\item\label{ass6}
Le us assume that a reinsurer calculates a premium rate $c_{re}$ according to the expected value rule with a loading factor $\eta$, i.e.
\[
c_{re}=\(1+\eta\)\E\(Z_n-h\(Z_n,b\)\).
\]
We assume that $\eta\geq\theta>0$, so an insurer does not earn without risk if he retains only zero value of claims.
\item\label{ass7}
The premium rate retained by an insurer in a unit period is denoted by $c\(b\)$ and is given by
\[
c\(b\)
=c-c_{\re}
=\(1+\eta\)\E h\(Z_n,b\)-\(\eta-\theta\)\mu.
\]
\item\label{ass8}
Let $U_n^b$ denote a financial surplus of an insurer at the end of the unit period $\(n-1,n\right]$ after the payment of premium and after the payoff. The process $U_n^b$ considered in the paper is given by
\[
U_n^b=U_{n-1}^b\(1+I_n\)+c\(b\)-h\(Z_n,b\).
\]
\item\label{ass9}
The ultimate ruin probability for this risk process in the finite time is denoted by $\Psi_n^b\(u,i_s\)$ and is defined by
\[
\begin{split}
\Psi_n^b\(u,i_s\)&=\Pr\(\bigcup_{i=1}^n\(U_i^b<0\)|U_0^b=u,I_0=i_s\) \\
&=\Pr\(U_i^b<0\text{ for some $i\leq n$}|U_0^b=u,I_0=i_s\).
\end{split}
\]
The ultimate ruin probability in the infinite time is given by
\[
\begin{split}
\Psi^b\(u,i_s\)&=\Pr\(\bigcup_{i=1}^{\infty}\(U_i^b<0\)|U_0^b=u,I_0=i_s\) \\
&=\Pr\(U_i^b<0\text{ for some $i\geq 1$}|U_0^b=u,I_0=i_s\).
\end{split}
\]
Obviously
\[
\Psi^b\(u,i_s\)=\lim_{n\to\infty}\Psi_n^b\(u,i_s\).
\]
\end{enumerate}

The further research is conducted for a proportional reinsurance. The premium rate retained by an insurer is
\begin{equation}\label{eq:premium}
c\(b\)=\(\(1+\eta\)b-\(\eta-\theta\)\)\mu.
\end{equation}
To avoid such an event that the ruin could occur with probability 1 it is assumed that
\begin{equation}\label{eq:premium1}
\E h\(Z_1,b\)<c\(b\).
\end{equation}

To write the self-contained paper, we give theorems from Jasiulewicz~\cite{HJ:ruin-discr} (Theorems~\ref{thm:ruin-recc1} and \ref{thm:est-up1}), which will be used in the analysis of the ruin probability. In the special case of reinsurance, namely proportional reinsurance,
the theorems analogous to Theorems~\ref{thm:ruin-recc1} and \ref{thm:est-up1} were given
in the paper by Diaspara and Romera~\cite{DiasparaRomera:Bounds}.

\begin{thm}\label{thm:ruin-recc1}
Ruin probability of an insurer in finite time is given recursively in the following way:
\begin{align}
\label{eq:ruin-recc11}
\Psi_1^b\(u,i_s\)&=\sum_{j=1}^l p_{sj}\bar{V}\(u\(1+i_j\)+c\(b\)\), \\
\label{eq:ruin-recc12}
\begin{split}
\Psi_{n+1}^b\(u,i_s\)&=\sum_{j=1}^l p_{sj}\Big\{\bar{V}\(u\(1+i_j\)+c\(b\)\) \\
&+\int\limits_0^{u\(1+i_j\)+c\(b\)}\Psi_n^b\(u\(1+i_j\)+c\(b\)-z,i_j\)\,dV\(z\)\Big\}.
\end{split}
\end{align}
Ruin probability in infinite time:
\[
\begin{split}
\Psi^b\(u,i_s\)&=\sum_{j=1}^l p_{sj}\Big\{\bar{V}\(u\(1+i_j\)+c\(b\)\) \\
&+\int\limits_0^{u\(1+i_j\)+c\(b\)}\Psi^b\(u\(1+i_j\)+c\(b\)-z,i_j\)\,dV\(x\)\Big\},
\end{split}
\]
where
\begin{equation}\label{eq:c(b)}
c\(b\)=\(1+\eta\)\E h\(Z_n,b\)-\(\eta-\theta\)\mu.
\end{equation}
\end{thm}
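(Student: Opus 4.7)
The plan is to derive the recursion by conditioning on what happens in the first period and invoking the Markov structure of the bivariate state $\(U_n^b,I_n\)$. For the base case \eqref{eq:ruin-recc11}, I would start from the dynamics in assumption~\ref{ass8}, which give $U_1^b=u\(1+I_1\)+c\(b\)-Z_1^{\ce}$ when $U_0^b=u$. Ruin at time $1$ is the event $\left\{Z_1^{\ce}>u\(1+I_1\)+c\(b\)\right\}$. Conditioning on $I_1=i_j$, which has conditional probability $p_{sj}$ given $I_0=i_s$, and using the independence of $Z_1^{\ce}$ from the interest chain together with $\Pr\(Z_1^{\ce}>x\)=\bar V\(x\)$, the law of total probability yields \eqref{eq:ruin-recc11}.

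For the inductive step I would split the ruin event within $n+1$ periods into two disjoint parts: ruin already at time~$1$, and survival at time~$1$ followed by ruin during the next $n$ periods. The first part contributes precisely the $\bar V$ term as in the base case. On the survival branch, condition on $I_1=i_j$ and on $Z_1^{\ce}=z$ with $0\leq z\leq u\(1+i_j\)+c\(b\)$; the surplus after one period then equals $u':=u\(1+i_j\)+c\(b\)-z$. Because the $\{Z_n\}$ are i.i.d.\ and independent of the interest chain, and because $\{I_n\}$ is a time-homogeneous Markov chain, the conditional distribution of $\left\{\(U_{1+k}^b,I_{1+k}\)\right\}_{k\geq 0}$ given $\(U_1^b,I_1\)=\(u',i_j\)$ coincides with the law of the process started from $\(u',i_j\)$. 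Hence the conditional probability of ruin during the next $n$ periods equals $\Psi_n^b\(u',i_j\)$. Integrating over $z$ against $dV\(z\)$ on the interval $\left[0,u\(1+i_j\)+c\(b\)\right]$ and summing over $j$ with weights $p_{sj}$ assembles \eqref{eq:ruin-recc12}.

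For the infinite-horizon identity I would pass to the limit $n\to\infty$ in \eqref{eq:ruin-recc12}. Since the events $\left\{U_i^b<0\text{ for some }i\leq n\right\}$ increase in $n$ to $\left\{U_i^b<0\text{ for some }i\geq 1\right\}$, one has $\Psi_n^b\(u,i_s\)\uparrow\Psi^b\(u,i_s\)$ pointwise in $\(u,i_s\)$. Monotone convergence then lets me interchange the limit with the finite sum and the integral over $\left[0,u\(1+i_j\)+c\(b\)\right]$, giving the stated identity.

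The main technical point, though essentially routine, is the Markov restart invoked in the inductive step: one must confirm that conditional on $\(U_1^b,I_1\)$, the remaining trajectory is distributed like an independent copy of the process started at $\(U_1^b,I_1\)$. This rests on two facts implicit in the setup: first, $I_0$ influences $\{I_n\}_{n\geq 1}$ only through $I_1$ by the Markov property of assumption~\ref{ass4}; second, $\{Z_n\}_{n\geq 2}$ are independent of $\(Z_1,I_0,I_1\)$ because the $Z_n$ are i.i.d.\ and, by standard convention in this model, independent of the interest chain. Once this restart property is in place, both the finite- and infinite-horizon recursions follow directly from the law of total probability.
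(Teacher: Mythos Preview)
Your proposal is correct and follows essentially the same route as the paper: condition on $\(I_1,Z_1^{\ce}\)$, split the ruin event into ``ruin at time~1'' versus ``survive at time~1 and ruin later,'' invoke the restart property, and pass to the limit for the infinite horizon. If anything, you are more careful than the paper, which justifies the restart by the phrase ``stationary with independent increments'' and passes to the limit without naming monotone convergence; your explicit appeal to the Markov structure of $\(U_n^b,I_n\)$ and to $\Psi_n^b\uparrow\Psi^b$ fills in exactly those details.
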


\begin{proof}
Let $Z_1^{\ce}=z$, $I_1=i_j$. If $z>u\(1+i_j\)+c\(b\)$, then a ruin will occur in the first period $\(0,1\right]$. Therefore
\[
\begin{split}
\Psi_1^b\(u,i_s\)&=\sum_{j=1}^l p_{sj}\Pr\(Z_1^b>u\(1+i_j\)+c\(b\)|I_1=i_1,I_0=i_s\) \\
&=\sum_{j=1}^l p_{sj}\bar{V}\(u\(1+i_j\)+c\(b\)\).
\end{split}
\]
The ruin in first $n+1$ periods can occur in two excluding ways:
\begin{itemize}
\item
the ruin will occur in the first period or
\item
the ruin will not occur in the first period but it will occur in next periods.
\end{itemize}
Since the process $U_n^b$ is stationary with independent increments then
\[
\begin{split}
\Psi_{n+1}^b\(u,i_s\)
&=\sum_{j=1}^l p_{sj}\int\limits_0^{\infty}\Pr\(\bigcup_{k=1}^{n+1}
\(u_k^b<0|Z_1^b=z,I_1=i_s\)\)dV\(z\) \\
&=\sum_{j=1}^l p_{sj}\Bigg(\bar{V}\(u\(1+i_j\)+c\(b\)\) \\
&+\int\limits_0^{u\(1+i_j\)+c\(b\)}
\Psi_n^b\(u\(1+i_j\)+c\(b\)-z,i_j\)dV\(z\)\Bigg).
\end{split}
\]
The probability of the ruin in infinite time is obtained by taking a two-sided limit in the above formula for $n\to\infty$.
\end{proof}

Recurrence formulas for the  ruin probability can be presented in a matrix form, which simplifies calculations using several
computer programs\footnote{In this paper the calculations were made by program \textit{Maxima}:  {http://maxima.sourceforge.net/ .}{}
}.

Let
\[
{\pmb\Psi}_n^b\left(u\right)
=\left[\Psi_n^b\left(u,i_1\right),\Psi_n^b\left(u,i_2\right),\dots,
\Psi_n^b\left(u,i_l\right)\right]
\]
and
\[
\vec{V}_n=\left[v^{\left(n\right)}_1,v^{\left(n\right)}_2,\dots,
v^{\left(n\right)}_l\right],
\]
where
\begin{align*}
v^{\left(1\right)}_j&=\bar{V}\left(u\left(1+i_j\right)+c\left(b\right)\right) \\
\intertext{and for $n\geq 2$}
v^{\left(n+1\right)}_j&=v^{\left(1\right)}_j
+\int\limits_0^{u\left(1+i_j\right)+c\left(b\right)}
\Psi_n^b\left(u\left(1+i_j\right)
+c\left(b\right)-z,i_j,\right)\,dV\left(z\right).
\end{align*}
Then we can write equations \eqref{eq:ruin-recc11} and \eqref{eq:ruin-recc12} in a matrix form
\[
{\pmb\Psi}_{n}^b\left(u\right)=\vec{V}_nP^T.
\]

\begin{thm}\label{thm:est-up1}
If\, $\E h\(Z_1,b\)<c\(b\)$ and there exists a positive constant $R\(b\)$ fulfilling the equation
\begin{equation}\label{eq:coeff_R}
\E e^{R\(b\)h\(Z_1,b\)}=e^{R\(b\)c\(b\)},
\end{equation}
the upper estimation of the ruin probability in finite and infinite time is in the form
\begin{equation}\label{eq:est-up1}
\Psi_n^b\(u,i_s\)\leq
\Psi^b\(u,i_s\)\leq
\xi\(b\) \E\(e^{-R\(b\)u\(1+I_1\)}|I_0=i_s\),
\end{equation}
where
\begin{equation}\label{eq:sup-beta}
\xi\(b\)=\sup_{x\geq c\(b\)}\frac{e^{R\(b\)x}\bar{V}\(x\)}
{\int\limits_x^{\infty}e^{R\(b\)z}dV\(z\)}\,,
\quad
0<\xi\(b\)\leq 1.
\end{equation}
\end{thm}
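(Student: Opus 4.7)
The approach I would take is induction on the finite horizon $n$, proving
\[
\Psi_n^b\(u, i_s\) \leq \xi\(b\)\,\E\(e^{-R\(b\) u\(1+I_1\)}|I_0 = i_s\)
\]
for every $n \geq 1$, and then obtaining the infinite-time bound \eqref{eq:est-up1} by passing to the limit using the identity $\Psi^b\(u,i_s\)=\lim_{n\to\infty}\Psi_n^b\(u,i_s\)$ recorded in Section~\ref{s:not-thm}. The preliminary $0<\xi\(b\)\leq 1$ follows because $e^{R\(b\)\(z-x\)}\geq 1$ on $[x,\infty)$, which gives $\bar{V}\(x\)\leq e^{-R\(b\)x}\int_x^\infty e^{R\(b\)z}\,dV\(z\)$, so every ratio in the supremum \eqref{eq:sup-beta} is at most $1$; positivity is immediate from positivity of the integrand.

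For the base case $n=1$ I would start from \eqref{eq:ruin-recc11}, apply the defining inequality $\bar{V}\(x\)\leq \xi\(b\)e^{-R\(b\)x}\int_x^\infty e^{R\(b\)z}\,dV\(z\)$ at $x=u\(1+i_j\)+c\(b\)\geq c\(b\)$, and dominate the tail by the full moment generating function $\int_0^\infty e^{R\(b\)z}\,dV\(z\)$. The Lundberg equation \eqref{eq:coeff_R} then replaces this by $e^{R\(b\)c\(b\)}$, which cancels part of $e^{-R\(b\)\(u\(1+i_j\)+c\(b\)\)}$, leaving $\xi\(b\) e^{-R\(b\)u\(1+i_j\)}$; averaging against $p_{sj}$ reproduces the right-hand side displayed above at $n=1$.

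For the inductive step I would substitute the induction hypothesis into \eqref{eq:ruin-recc12}. Writing $x_j=u\(1+i_j\)+c\(b\)$, the inductive hypothesis applied to $\Psi_n^b\(x_j-z,i_j\)$ yields a conditional expectation over $I_1$ given $I_0=i_j$; since $1+I_1\geq 1$ and $x_j-z\geq 0$ throughout the integration range, this expectation is dominated by $e^{-R\(b\)\(x_j-z\)}$. The $\bar{V}\(x_j\)$ term is handled exactly as in the base case, contributing the tail integral of $e^{R\(b\)z}\,dV\(z\)$ over $[x_j,\infty)$, while the integrated inductive hypothesis contributes the same integrand over $[0,x_j]$. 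Their sum is $\int_0^\infty e^{R\(b\)z}\,dV\(z\)=e^{R\(b\)c\(b\)}$, and the identical cancellation as in the base case yields $\xi\(b\) e^{-R\(b\)u\(1+i_j\)}$ for each $j$ and, after averaging, the statement at step $n+1$.

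The main obstacle I anticipate is the bookkeeping in the inductive step: had the crude bound $1+I_1\geq 1$ not been applied inside the integral, the factor $e^{-R\(b\)\(x_j-z\)\(1+I_1\)}$ would fail to align with the $e^{R\(b\)z}$ coming from $dV\(z\)$, and the two partial integrals would not merge into the full moment generating function. Discarding the interest-rate information inside the recurrence while retaining it in the outermost exponential is precisely the asymmetry that \eqref{eq:est-up1} records through the conditional expectation $\E\(e^{-R\(b\)u\(1+I_1\)}|I_0=i_s\)$. The infinite-time bound then follows from monotone convergence of $\Psi_n^b$ to $\Psi^b$.
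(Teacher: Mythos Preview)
Your proposal is correct and follows essentially the same route as the paper's proof: both argue by induction on $n$ via the recursions \eqref{eq:ruin-recc11}--\eqref{eq:ruin-recc12}, bound $\bar V$ through the supremum \eqref{eq:sup-beta}, use the crude estimate $1+I_1\geq 1$ inside the integral so that the two pieces recombine into the full moment generating function, and invoke \eqref{eq:coeff_R} to cancel $e^{R(b)c(b)}$. Your remark that discarding the interest-rate factor inside the recurrence is exactly what allows the partial integrals to merge is the same observation underlying the paper's inequality \eqref{eq:wsp-beta4}.
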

\begin{proof}
For every $x\geq 0$ we have
\begin{equation}\label{eq:wsp-beta1}
\begin{split}
\bar{V}\(x+c\(b\)\)
&=\frac{e^{R\(b\)x}\bar{V}\(x+c\(b\)\)}
{\displaystyle\int\limits_x^{\infty}e^{R\(b\)z}\,dV\(z+c\(b\)\)}
e^{-R\(b\)x}\int\limits_x^{\infty}e^{R\(b\)z}\,dV\(z+c\(b\)\) \\
&=\frac{e^{R\(b\)\(x+c\(b\)\)}\bar{V}\(x+c\(b\)\)}
{\displaystyle \int\limits_{x+c\(b\)}^{\infty}e^{R\(b\)y}\,dV\(y\)}
e^{-R\(b\)x}\int\limits_{x+c\(b\)}^{\infty}e^{R\(b\)\(y-c\(b\)\)}\,dV\(y\).
\end{split}
\end{equation}
Let
\[
g\(t\)=
\frac{e^{R\(b\)\(t\)}\bar{V}\(t\)}
{\displaystyle \int\limits_{t}^{\infty}e^{R\(b\)y}\,dV\(y\)}.
\]
Then
\begin{equation}\label{eq:wsp-beta2}
\begin{split}
\bar{V}\(x+c\(b\)\)
&\leq\sup_{x\geq 0}
\left\{g\(x+c\(b\)\)\right\}
e^{-R\(b\)x}\int\limits_{x+c\(b\)}^{\infty}e^{R\(b\)\(y-c\(b\)\)}\,dV\(y\) \\
&=\beta
e^{-R\(b\)x}\int\limits_{x+c\(b\)}^{\infty}e^{R\(b\)\(y-c\(b\)\)}\,dV\(y\),
\end{split}
\end{equation}
where
\[
\beta=\sup_{y\geq c\(b\)}
g\(y\).
\]
From Equation~\eqref{eq:coeff_R} we obtain
\begin{equation}\label{eq:wsp-beta3}
\bar{V}\(x+c\(b\)\)
\leq
\beta
e^{-R\(b\)x}\int\limits_{-\infty}^{\infty}e^{R\(b\)\(y-c\(b\)\)}\,dV\(y\)
=\beta e^{-R\(b\)x}.
\end{equation}

Whereas the inequality~\eqref{eq:sup-beta} follows from the fact that for $z\geq t$ an inequality $\exp\(R\(b\)z\)\geq\exp\(R\(b\)t\)$ occurs. Therefore
\[
\frac{\displaystyle\int\limits_t^{\infty} e^{R\(b\)z}\,dV\(z\)}{e^{R\(b\)t}\bar{V}\(t\)}
\geq
\frac{\displaystyle e^{R\(b\)t}\int\limits_t^{\infty}dV\(z\)}{e^{R\(b\)t}\bar{V}\(t\)}=1.
\]

From the conversion of this inequality the inequality~\eqref{eq:sup-beta} is obtained.

In the next step we prove  \eqref{eq:est-up1} inductively.
From Theorem~\ref{thm:ruin-recc1} and inequality \eqref{eq:wsp-beta3} we have
\[
\Psi_1^b\(u,i_s\)\leq\sum_{j=1}^l p_{sj}\beta e^{-R\(b\)u\(1+i_j\)}
=\beta\E\(e^{-R\(b\)u\(1+I_1\)}|I_0=i_s\).
\]
From a inductive assumption
\[
\Psi_n^b\(u,i_s\)\leq
\beta\E\(e^{-R\(b\)u\(1+I_1\)}|I_0=i_s\)
\]
and Theorem~\ref{thm:ruin-recc1} we have
\[
\begin{split}
\Psi_{n+1}^b\(u,i_s\)
&\leq\sum_{i=j}^l p_{sj}
\Bigg(
\beta
e^{-R\(b\)u\(1+i_j\)}\int\limits_{u\(1+i_j\)+c\(b\)}^{\infty}e^{R\(b\)\(y-c\(b\)\)}\,dV\(y\) \\
&+\int\limits_0^{u\(1+i_j\)+c\(b\)}
\beta\E\(e^{-R\(b\)\(u\(1+i_j\)-z+c\(b\)\)\(1+I_1\)}|I_0=i_s\)
\Bigg).
\end{split}
\]
Since
\begin{equation}\label{eq:wsp-beta4}
\E\(e^{-R\(b\)\(u\(1+i_j\)-z+c\(b\)\)\(1+I_1\)}|I_0=i_s\)
\leq
e^{-R\(b\)\(u\(1+i_j\)-z+c\(b\)\)},
\end{equation}
then
\[
\begin{split}
\Psi_{n+1}^b\(u,i_s\)
&\leq \sum_{i=j}^l p_{sj}
\beta
e^{-R\(b\)u\(1+i_j\)}\int\limits_{-\infty}^{\infty}e^{R\(b\)\(y-c\(b\)\)}\,dV\(y\) \\
&=\beta\E\(e^{-R\(b\)u\(1+I_1\)}|I_0=i_s\).
\end{split}
\]
Taking limits for $n\to\infty$ we obtain the inequality~\eqref{eq:est-up1}.
\end{proof}

Theorem \ref{thm:ruin-recc1} gives recurrence formulae for the ruin probability and Theorem~\ref{thm:est-up1} gives an upper estimation of the ruin probability using Lundberg adjustment coefficient, which exists only for a light-tailed distribution. Therefore one cannot use Theorem~\ref{thm:est-up1} to estimate the ruin probability for heavy-tailed distributions. In that case we will use an asymptotic ruin probability in the respect of an initial capital tending to infinity, whereas the total loss has the distribution with a regularly varying tail.

\begin{dfn}
A distribution $F$ on $\(-\infty,\infty\)$ has a regularly varying tail if there exists some constant $\alpha\geq 0$ such that for every $y>0$ is
\[
\lim_{x\to\infty}\frac{\bar{F}\(xy\)}{\bar{F}\(x\)}=y^{-\alpha}.
\]
The class of such distributions is denoted by $\mathcal{R}_{-\alpha}$.
\end{dfn}

\begin{thm}\label{thm:est-up2}
Let total loss $Z_n$ have cdf\ $W\in\mathcal{R}_{-\alpha}$ for some $\alpha>0$. If\ $1+I_n>0$ for any fixed $I_0=i_s$ there exists a finite positive moment of rank $\alpha$ of discounting factor $\(1+I_1\)^{-1}$, then for a proportional reinsurance for every $I_0=i_s$ and every $n$ we have
\begin{equation}\label{eq:est-up2}
\Psi_n^b\(u,i_s\)\sim c_n\(i_s\)\bar{V}\(u\),
\end{equation}
if $u\to\infty$, where $c_n\(i_s\)$ are given recursively
\begin{equation}
\label{eq:cn}
c_n\(i_s\)=\E\(\(1+c_{n-1}\(I_1\)\)\(\frac{1}{1+I_1}\)^{\alpha}\bigg|I_0=i_s\),
\end{equation}
with an initial condition $c_0\(i_s\)=0$
for $n=1,2,\dots$
\end{thm}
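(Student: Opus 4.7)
My plan is to prove the result by induction on $n$, using Theorem~\ref{thm:ruin-recc1} as the engine of the recursion and exploiting two facts about the tail of $V$: (i) for proportional reinsurance $Z_n^{\ce}=bZ_n$, so $\bar V(y)=\bar W(y/b)$ also belongs to $\mathcal R_{-\alpha}$, and (ii) every regularly varying distribution is subexponential, which gives the convolution tail relation $\overline{V*V}(A)\sim 2\bar V(A)$, equivalently $\int_0^A\bar V(A-z)\,dV(z)\sim\bar V(A)$ as $A\to\infty$. Throughout I will write $A_j=u(1+i_j)+c(b)$ and observe that $A_j/u\to 1+i_j$, so by regular variation $\bar V(A_j)/\bar V(u)\to(1+i_j)^{-\alpha}$.

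For the base case $n=1$, formula \eqref{eq:ruin-recc11} gives $\Psi_1^b(u,i_s)=\sum_j p_{sj}\bar V(A_j)$. Dividing by $\bar V(u)$ and using regular variation together with the finite number of Markov states (so the sum and the limit can be exchanged trivially), one obtains
\[
\frac{\Psi_1^b(u,i_s)}{\bar V(u)}\longrightarrow\sum_{j=1}^l p_{sj}(1+i_j)^{-\alpha}=\E\!\left(\tfrac{1}{(1+I_1)^{\alpha}}\Big|I_0=i_s\right),
\]
which matches \eqref{eq:cn} with $c_0\equiv 0$.

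For the inductive step, assume $\Psi_n^b(y,i_j)\sim c_n(i_j)\bar V(y)$ as $y\to\infty$ for every state $i_j$. From \eqref{eq:ruin-recc12} we have $\Psi_{n+1}^b(u,i_s)=\sum_j p_{sj}\bigl[\bar V(A_j)+J_j(u)\bigr]$, where $J_j(u)=\int_0^{A_j}\Psi_n^b(A_j-z,i_j)\,dV(z)$. The first term contributes $(1+i_j)^{-\alpha}\bar V(u)$ as in the base case. For $J_j(u)$ I would show
\[
J_j(u)\sim c_n(i_j)\int_0^{A_j}\bar V(A_j-z)\,dV(z)\sim c_n(i_j)\bar V(A_j)\sim c_n(i_j)(1+i_j)^{-\alpha}\bar V(u),
\]
the middle step being the subexponential/convolution tail identity. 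Adding the two contributions and summing over $j$ gives $\Psi_{n+1}^b(u,i_s)\sim\sum_j p_{sj}(1+c_n(i_j))(1+i_j)^{-\alpha}\bar V(u)$, which is exactly $c_{n+1}(i_s)\bar V(u)$ by \eqref{eq:cn}.

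The main obstacle will be justifying the first $\sim$ in the displayed chain, i.e.\ transferring the pointwise asymptotic $\Psi_n^b(y,i_j)/\bar V(y)\to c_n(i_j)$ under the integral $\int_0^{A_j}\cdot\,dV(z)$. I would handle this by splitting the integration at some threshold $y_0$: on $\{A_j-z\geq y_0\}$ the inductive hypothesis together with Potter's bounds for regularly varying functions gives a uniform two-sided envelope $(c_n(i_j)\pm\varepsilon)\bar V(A_j-z)$, which can be integrated against $dV$ and controlled by the subexponential identity; on $\{A_j-z<y_0\}$ the trivial bound $\Psi_n^b\le 1$ gives a contribution of order $\bar V(A_j-y_0)-\bar V(A_j)=o(\bar V(A_j))$ by the long-tail property of $V\in\mathcal R_{-\alpha}$. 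Letting $\varepsilon\downarrow 0$ closes the argument. The finite-moment hypothesis on $(1+I_1)^{-1}$ of order $\alpha$ guarantees that each $c_n(i_s)$ stays finite along the recursion, so the asymptotic constants in the statement are well-defined.
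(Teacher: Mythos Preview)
Your proposal is correct, but it takes a markedly different route from the paper. The paper's proof is a two-line reduction: it notes that under proportional reinsurance $Z_n^{\ce}=bZ_n$, verifies that $V\in\mathcal R_{-\alpha}$ whenever $W\in\mathcal R_{-\alpha}$ (exactly your observation (i)), and then simply invokes Theorem~5.1 of Cai and Dickson~\cite{CaiDickson:ruin}, which already establishes the asymptotic \eqref{eq:est-up2} for the model without reinsurance. No inductive argument is carried out in the paper itself.

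You instead reconstruct that underlying result from scratch: an induction on $n$ driven by the recursion of Theorem~\ref{thm:ruin-recc1}, with the key analytic step being the subexponential convolution identity $\int_0^A\bar V(A-z)\,dV(z)\sim\bar V(A)$ together with a threshold split to pass the inductive hypothesis under the integral. This is essentially the machinery behind the Cai--Dickson theorem, so your argument is a self-contained reproof rather than a citation. The trade-off is clear: the paper's route is shorter and delegates the real work to the literature; yours exposes the mechanism and would survive in a setting where one could not cite \cite{CaiDickson:ruin}. One small remark: the Potter bounds you mention are not actually needed in the splitting argument as you describe it---the pointwise two-sided bound from the inductive hypothesis on $\{A_j-z\ge y_0\}$, the trivial bound $\Psi_n^b\le 1$ on the complement, and the long-tail property $\bar V(A_j-y_0)\sim\bar V(A_j)$ already suffice.
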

\begin{proof}
In the paper by Cai and Dickson~\cite{CaiDickson:ruin}  the above theorem was proved in the case where an insurer does not apply reinsurance but invests the financial surplus.
It is sufficient to remark that with proportional reinsurance $Z_n^{\ce}=bZ_n$, if  $Z_n$ has a distribution with a regularly changing  tail with an index $\alpha$,
then $Z_n^{\ce}$ has also the distribution with a regularly varying  tail with an index $\alpha$. This follows from
\[
\lim_{x\to\infty}\frac{\bar{V}\(xy\)}{\bar{V}\(x\)}=
\lim_{x\to\infty}\frac{\bar{W}\(yx/b\)}{\bar{W}\(x/b\)}=
\lim_{z\to\infty}\frac{\bar{W}\(yz\)}{\bar{W}\(z\)}=y^{-\alpha},
\]
where $z=x/b\to\infty$, if $x\to\infty$, because $b>0$.
Therefore our Theorem~\ref{thm:est-up2} is fulfilled for $Z_{\ce}$ by Theorem 5.1 from the paper Cai and Dickson~\cite{CaiDickson:ruin}.
Our proof repeats the arguments given in Theorem 5.1 from that paper if we substitute $V$ with~$G$.
\end{proof}

In the next sections we will consider particular cases if the total loss in the unit period has an exponential distribution with mean 1, i.e. $W\(x\)=1-e^{-x}$ and has Pareto distribution with the same mean: $W\(x\)=1-\(\beta/x\)^{\alpha}$, $x>\beta$, $\alpha>1$, $\beta=\(\alpha-1\)/\alpha$. In Section~\ref{s:propo} we give analytical formulae only for the cases $l=1$, $i_1=0$
(i.e. financial surplus is not invested)
and small values of the parameter $n$.
To determine these formulae we use the program \textit{Maxima} assigned to symbolic calculations.

Numerical results will be presented for the case $l=2$ and for selected values of the parameters $\alpha$, $\beta$, $\eta$, $\theta$ and $b$.

\section{Ruin probability}
\label{s:propo}

Calculations of values of function $\Psi^b\(u,i_s\)$ given by Theorem~\ref{thm:ruin-recc1} were conducted for $b=0.2,\,0.3,\dots,1.0$, $u=0,\,1,\,2,\,3\,,4,\,5$ and $n=1,\,2,\dots,10$.
We considered the cases
\begin{itemize}
\item
$l=1$ for $i_1=0$,
\item
$l=2$ for $i_1=0.3$, $i_2=0.5$ with transition matrix
\[
P=
\begin{bmatrix}
0.4 & 0.6 \\
0.3 & 0.7
\end{bmatrix}.
\]
\end{itemize}
The values $\eta=0.25$ and $\theta=0.2$ were taken.
For $\E h\(Z_n,b\)=b$ from \eqref{eq:c(b)} we obtain  the formula
\[
c\(b\)=\(1+\eta\)b-\(\eta-\theta\)=1.25b-0.05.
\]
The condition \eqref{eq:premium1} is fulfilled for $b>1-\theta/\eta=0.2$.

\subsection{Exponential distribution}
\label{ss:expo}

Let us assume that $Z_n$ has the exponential distribution with mean 1. Hence $Z_n^{\ce}=b Z_n$ has the distribution function
\begin{equation}\label{eq:expo_pdf}
V\(x\)=1-e^{-x/b}
\end{equation}
for $x\geq 0$ and $\E Z_n^{\ce}=b$, $\V Z_n^{\ce}=b^2$.

The explicit formulae for function $\Psi_{1n}^b\(u,i_s\)$ for $n\geq 2$ are too complicated to present.
We take $l=1$ and $i_1=0$.
\begin{align}
\Psi_1^b\(u\)=&
{e}^{\frac{-u-\theta+\left( -b\right) \left( \eta+1\right) +\eta}{b}} \\
\Psi_2^b\(u\)=&
\begin{array}{l}
\ds\frac{\left( {e}^{2\eta/b}u+{e}^{2\eta/b}\theta
+\left( \left( b-1\right) \eta+b\right) {e}^{2\eta/b}\right) {e}^{-u/b-2\theta/b-2\eta-2}}{b} \\
+{e}^{\(-u-\theta-b\left( \eta+1\right) +\eta\)/b}
\end{array}
\end{align}
Formulae for $\Psi_n^b\(u\)$ for $n\leq 5$ obtained from \textit{Maxima} were used to verify the correctness of numerical algorithms which are used for greater $n$ and~$l$.

{
\tabcolsep 4pt
\begin{table}[!hbt]
\centering
\caption{\label{tab:ruin_exp}Values of ruin probabilities for exponential distribution}

\smallskip

\footnotesize

\begin{tabular}{|c|c|c|*{9}{c}|}
\hline
$n$ & $i_s$ & $u$ & \multicolumn{9}{c|}{$b$} \\\cline{4-12} 
& & &  0.2 & 0.3 & 0.4 & 0.5 & 0.6 & 0.7 & 0.8 & 0.9 & 1.0 \\\hline
5  & $3\%$ & 1 & 
0.0087 &   0.0385 &   0.0776 &   0.1164 &   0.1512 &   0.1814 &   0.2073 &   0.2299 &   0.2494 \\
&& 2 & 0.0001 &   0.0029 &   0.0119 &   0.0271 &   0.0460 &   0.0665 &   0.0871 &   0.1074 &   0.1265 \\  
&& 3 & 0.0000 &   0.0002 &   0.0017 &   0.0060 &   0.0134 &   0.0236 &   0.0357 &   0.0491 &   0.0630 \\   
&& 4 &   0.0000 &   0.0000 &   0.0002 &   0.0013 &   0.0038 &   0.0081 &   0.0143 &   0.0220 &   0.0308 \\   
&& 5 &   0.0000 &   0.0000 &   0.0000 &   0.0003 &   0.0010 &   0.0027 &   0.0056 &   0.0096 &   0.0148 \\ \cline{2-12}
& $5\%$ & 1 &   0.0046 &   0.0256 &   0.0580 &   0.0934 &   0.1267 &   0.1568 &   0.1832 &   0.2067 &   0.2271 \\    
&& 2 &   0.0001 &   0.0015 &   0.0077 &   0.0196 &   0.0357 &   0.0542 &   0.0734 &   0.0927 &   0.1113 \\    
&& 3 &   0.0000 &   0.0001 &   0.0010 &   0.0040 &   0.0098 &   0.0183 &   0.0288 &   0.0409 &   0.0539 \\   
&& 4 &   0.0000 &   0.0000 &   0.0001 &   0.0008 &   0.0026 &   0.0061 &   0.0111 &   0.0178 &   0.0257 \\    
&& 5 &   0.0000 &   0.0000 &   0.0000 &   0.0002 &   0.0007 &   0.0020 &   0.0042 &   0.0076 &   0.0121 \\ \hline 
10 & $3\%$& 1 &   0.0112 &   0.0493 &   0.0978 &   0.1448 &   0.1856 &   0.2203 &   0.2494 &   0.2749 &   0.2964 \\   
&& 2 &   0.0003 &   0.0049 &   0.0190 &   0.0411 &   0.0669 &   0.0936 &   0.1193 &   0.1442 &   0.1669 \\    
&& 3 &   0.0000 &   0.0005 &   0.0035 &   0.0113 &   0.0236 &   0.0391 &   0.0564 &   0.0748 &   0.0932 \\    
&& 4 &    0.0000 &   0.0000 &   0.0006 &   0.0030 &   0.0081 &   0.0160 &   0.0262 &   0.0383 &   0.0515 \\   
&& 5 &    0.0000 &   0.0000 &   0.0001 &   0.0008 &   0.0027 &   0.0064 &   0.0119 &   0.0193 &   0.0281 \\\cline {2-12}
& $5\%$ & 1 &   0.0049 &   0.0282 &   0.0654 &   0.1064 &   0.1452 &   0.1800 &   0.2103 &   0.2372 &   0.2605 \\    
&& 2 &   0.0001 &   0.0020 &   0.0101 &   0.0256 &   0.0462 &   0.0695 &   0.0932 &   0.1168 &   0.1392 \\    
&& 3 &   0.0000 &   0.0001 &   0.0015 &   0.0061 &   0.0146 &   0.0267 &   0.0411 &   0.0574 &   0.0742 \\    
&& 4 & 0.0000 &   0.0000 &   0.0002 &   0.0014 &   0.0046 &   0.0101 &   0.0180 &   0.0280 &   0.0393 \\    
&& 5 &   0.0000 &   0.0000 &   0.0000 &   0.0003 &   0.0014 &   0.0038 &   0.0078 &   0.0135 &   0.0206   
\\\hline
\end{tabular}

\end{table}
}

From Table~\ref{tab:ruin_exp} we obtain the following conclusions.

\begin{itemize}
\item
If the  initial capital grows, the part of the insurer's retained loss also grows with the constant level of risk of the company bankruptcy for any time horizon $n$.
\item
If the initial invention rate grows then the level of retention $b$ also grows with the constant ruin probability for any time horizon $n$.
\item
If time horizon $n$ grows, then the ruin probability grows for every fixed $u\geq 0.2$ and interest rate $I_0=i_s$. The greater $u$, the smaller ruin probability.
\end{itemize}

{
\begin{table}[!hbt]
\centering
\caption{\label{tab:b0_exp}Maximal level of retention $b$, for which the ruin probability does not exceed $0.05$ for exponential distribution.}

\smallskip

\begin{tabular}{|*{7}{c|}}
\hline 
\multicolumn{2}{|c|}{Initial capital $u$}  & 1 & 2 & 3 & 4 & 5 \\
\hline\hline 
$n=5$ & $i_s=3\%$ & 0.3289 & 0.6188 & 0.9062 & 1.0000 & 1.0000 \\
      & $i_s=5\%$ & 0.3752 & 0.6775 & 0.9700 & 1.0000 & 1.0000 \\
\hline 
$n=10$ & $i_s=3\%$ & 0.3005 & 0.5339 & 0.7626 & 0.9876 & 1.0000 \\
       & $i_s=5\%$ & 0.3585 & 0.6160 & 0.8549 & 1.0000 & 1.0000 \\\hline
\end{tabular}

\end{table}
}

Table~\ref{tab:b0_exp} implies that with initial capital $u\geq 4$ and interest rate $I_0=i_s=0.03$ for every $b$ the ruin probability does not exceed $0.05$ for time horizon $n=5$ and $n=10$. This means that without using an insurance the insurer is exposed to bankruptcy with a small probability not exceeding $5\%$.

In Table \ref{tab:b0_exp} the number 1 means that without reinsurance an insurer will have the level of bankruptcy below $5\%$.

We calculate the parameter $\xi\(b\)$ from Equation \eqref{eq:sup-beta} for $V\(x\)$ defined by \eqref{eq:expo_pdf}:
\begin{equation}\label{eq:sup_beta_expo}
\xi\(b\)=\sup_{x\geq c\(b\)}\frac{e^{R\(b\)x}\bar{V}\(x\)}
{\ds\int_x^{\infty}e^{R\(b\)z}dV\(z\)}
=\sup_{x\geq c\(b\)}\frac{e^{R\(b\)x}e^{-x/b}}
{\ds\int_x^{\infty}e^{R\(b\)z}\frac{1}{b}e^{-z/b}\,dz}\,.
\end{equation}
We calculate the integral under assumption that $bR\(b\)<1$:
\[
\int\limits_x^{\infty}e^{R\(b\)z}\frac{1}{b}e^{-z/b}\,dz
=\frac{1}{R\(b\)-1/b}\,e^{\(R\(b\)-1/b\)z}\Big|_{z=x}^{z=\infty}
=\frac{1}{1-bR\(b\)}e^{\(R\(b\)-1/b\)x}.
\]
After substitution to \eqref{eq:sup_beta_expo} we have
\[
\xi\(b\)=\sup_{x\geq c\(b\)}\frac{e^{\(R\(b\)-1/b\)x}}
{\frac{1}{1-bR\(b\)}e^{\(R\(b\)-1/b\)x}}.
\]
Hence
\begin{equation}\label{eq:sup_beta_expo_res}
\xi\(b\)=1-bR\(b\).
\end{equation}
For the parameter $b>1-\theta/\eta$ the adjustment coefficient $R\(b\)$ is the positive solution of Equation~\eqref{eq:coeff_R}.  Since the moment generating function $V\(x\)$ has the form
\[
M\(z\)=\frac{1}{1-bz},
\]
where $z<1/b$, then Equation~\eqref{eq:coeff_R} has the form
\[
\frac{1}{1-bR\(b\)}=e^{R\(b\)c\(b\)}
\]
from which we determine $R\(b\)$.

Based on Theorem~\ref{thm:est-up1}, the upper estimation of the ruin probability has the form
\begin{equation}\label{eq:upp_ruin}
\Psi_n^b\(u,i_s\)\leq
\(1-bR\(b\)\)\sum_{t=1}^l p_{st} e^{-R\(b\)u\(1+i_t\)},
\quad
n=1,2,\dots
\end{equation}
Let us denote the right-hand-side of the inequality~\eqref{eq:upp_ruin} by $g^b\(u,i_s\)$.
Figure \ref{fig:ueU2} depictes graphs of $\Psi_n^b\(u,i_s\)$
for an exponential distribution for $n=5$ and $n=10$, for each one for $b=0.2,\,0.4,\dots,1.0$ and for $i_2=0.05$.
In Figure \ref{fig:ue2} graphs of $\Psi_n^b\(u,i_s\)$ for $n=5$ and $n=10$ were depicted, for $u=1,\,2,\,3,\,4,\,5$ and for $i_2=0.05$. Graphs for $i_1=0.03$ are almost the same  so we omit them. The differences are easy to observe in Table~\ref{tab:ruin_exp}.

\begin{figure}[!ht]
\centering
\includegraphics{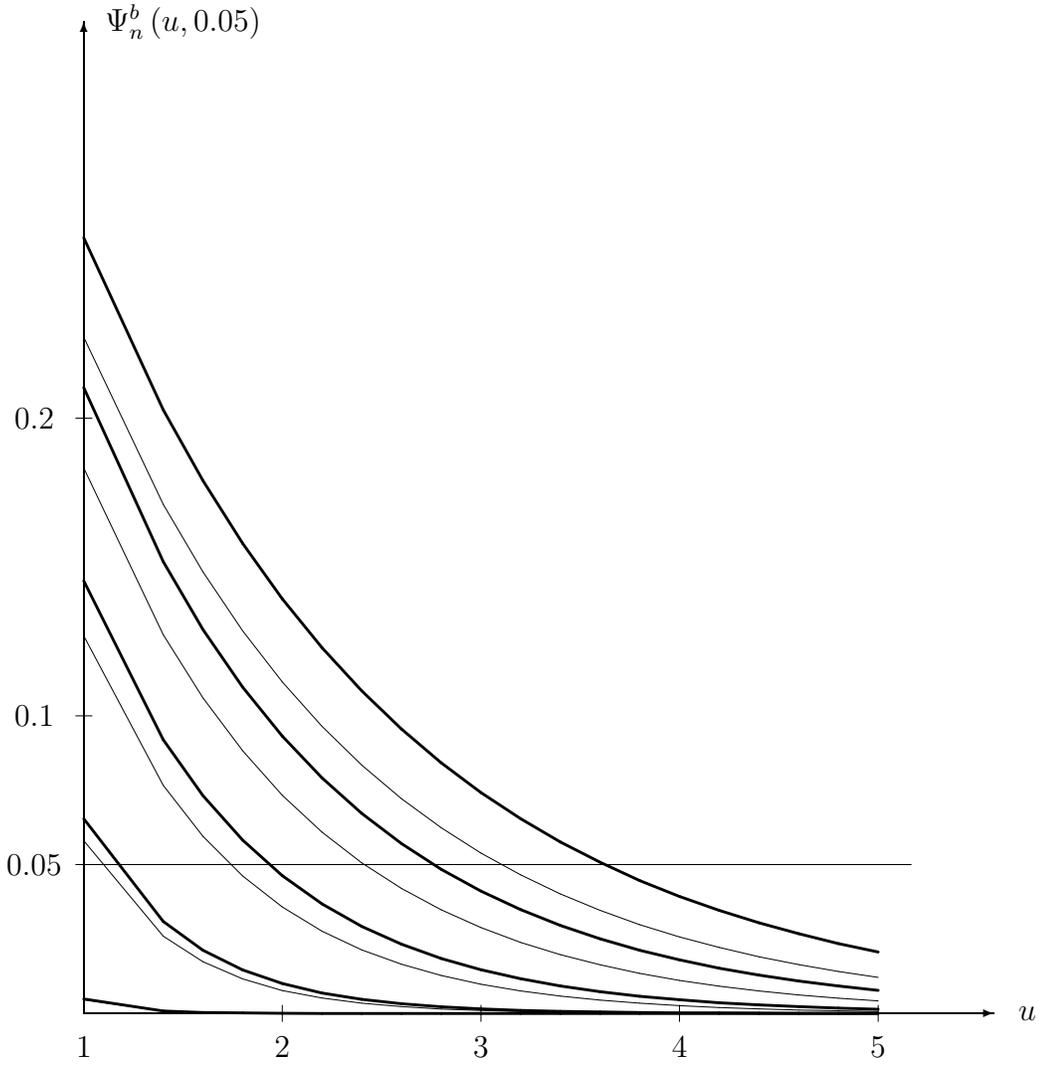}

\vspace{2ex}
\caption{\label{fig:ueU2}Ruin probability for exponential distribution as a function of~$u$. $\Psi_5^b\(u,0.05\)$ -- thin line, $\Psi_{10}^b\(u,0.05\)$ -- thick line, from the lowest to the highest for $b=0.2,\,0.4\,,0.6\,,0.8,\,1.0$ respectively.}
\end{figure}

\begin{figure}[!htb]
\centering
\includegraphics{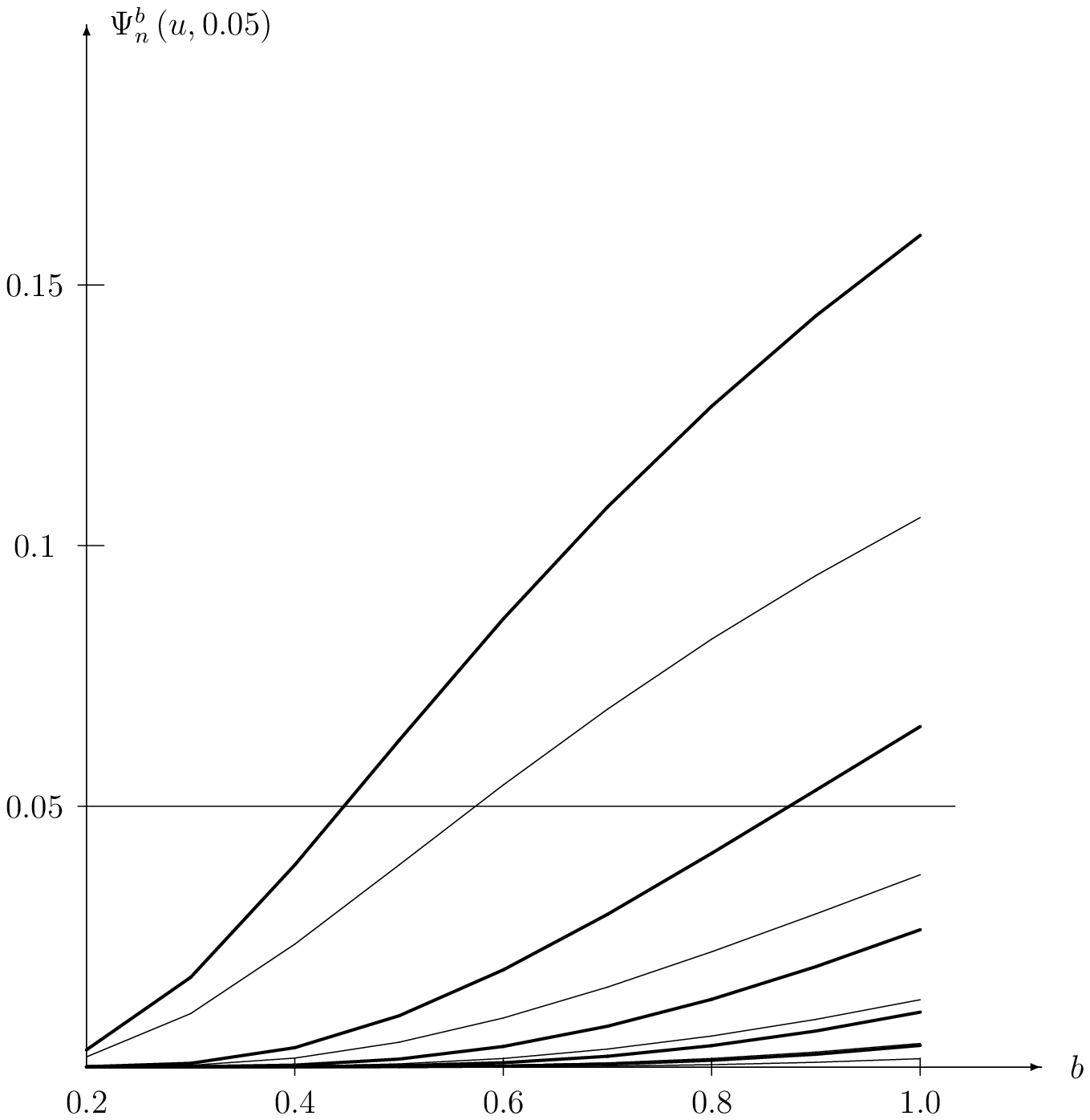}

\vspace{2ex}
\caption{\label{fig:ue2}Ruin probability for exponential distribution as a function of~$b$. $\Psi_5^b\(u,0.05\)$ -- thin line, $\Psi_{10}^b\(u,0.05\)$ -- thick line, from the highest do the lowest for $u=1,2,3,4,5$ respectively.}
\end{figure}

\subsection{Pareto distribution}
\label{ss:Pareto}

We assume that the total loss $Z_n$ has Pareto distribution with the distribution function
\begin{equation}\label{eq:Pareto_pdf}
W\(x\)=1-\(\frac{\beta}{x}\)^{\alpha}
\end{equation}
for $x\geq\beta>0$.
The random variable $Z_n$ has the expectation
\[
\E X=\frac{\alpha\beta}{\alpha-1}
\]
for $\alpha>1$ and a variance
\[
\V X=\frac{\alpha\beta^2}{\(\alpha-1\)^2\(\alpha-2\)}
\]
for $\alpha>2$.

We assume that $\E Z_n=1$. Hence the parameter $\beta$ must be in the form
\[
\beta=\frac{\alpha-1}{\alpha}\,.
\]
The loss $Z_n^{\ce}=bZ_n$ retained by insurer has cdf
\begin{equation}\label{eq:Pareto_strata}
V\(x\)=1-\(\frac{b\beta}{x}\)^{\alpha}
\end{equation}
for $x\geq b\beta$.

In the numerical calculations we assume $\alpha=1.25$ similarly to the paper by~Palmowski~\cite{Palmowski:Approx}. In this paper it was show that the greatest losses which came out at the end of eighties and nineties of XX century have Pareto distribution with the parameter approximately equal to  1.24138. With such a value of $\alpha$ the variance is infinite.

From \eqref{eq:c(b)} we have
\[
c\(b\)=\(1+\eta\)b-\(\eta-\theta\).
\]
The function $\Psi_1^b\(u,i_s\)$ can be set by \eqref{eq:ruin-recc11} in explicit form only for $n=1$, $l=1$ and $i_1=0$.
\begin{equation}
\Psi_1^b\(u\)=
\left( \frac{b\beta}{u+\theta+b\left(\eta+1\right) -\eta}\right)^{\alpha} \\
\end{equation}
The cases $n>1$ need numerical integrations. Let us consider the case $n=2$. In this case it is necessary to calculate the integral
\[
\alpha{\left( b\beta\right) }^{\alpha}\int\limits_{b\,\beta}^{x+c\(b\)}{\left( \frac{b\beta}{u+\theta+b\left( \eta+1\right) -\eta-z}\right) }^{\alpha}{z}^{-\(\alpha+1\)}dz.
\]
Substituting $A=u+\theta+b\left( \eta+1\right) -\eta$ we come to the problem of the calculation of the integral
\[
\int\limits\frac{1}{\(A-z\)^{\alpha}z^{\alpha+1}}\,dz
=-\frac{\(1-z/A\){}_2F_1\(-\alpha,\alpha;1;1-\alpha,x/A\)}
{\alpha\(A-z\)^{\alpha}z^{\alpha}}\,,
\]
where $_2F_1\(a,b;c;z\)$ is the hypergeometric function.

{
\tabcolsep 4pt
\begin{table}[!hbt]
\centering
\caption{\label{tab:ruin_Pareto}Values of ruin probabilities for Pareto distribution}

\smallskip

\footnotesize

\begin{tabular}{|c|c|c|*{9}{c}|}
\hline
$n$ & $i_s$ & $u$ & \multicolumn{9}{c|}{$b$} \\\cline{4-12} 
& & &  0.2 & 0.3 & 0.4 & 0.5 & 0.6 & 0.7 & 0.8 & 0.9 & 1.0 \\\hline
5  & $3\%$ & 1 & 0.0471 &   0.0663 &   0.0818 &   0.0945 &   0.1050 &   0.1156 &   0.1214 &   0.1280 &   0.1337 \\   
 & &2 & 0.0255 &   0.0384 &   0.0499 &   0.0602 &   0.0693 &   0.0787 &   0.0846 &   0.0912 &   0.0972 \\     
 & &3 & 0.0169 &   0.0263 &   0.0352 &   0.0434 &   0.0510 &   0.0590 &   0.0644 &   0.0704 &   0.0759 \\     
 & &4 & 0.0124 &   0.0197 &   0.0267 &   0.0335 &   0.0399 &   0.0468 &   0.0515 &   0.0569 &   0.0618 \\     
 & &5 & 0.0097 &   0.0156 &   0.0214 &   0.0270 &   0.0325 &   0.0384 &   0.0427 &   0.0474 &   0.0519 \\ \cline{2-12}
 &$5\%$ &1 & 0.0421 &   0.0603 &   0.0754 &   0.0880 &   0.0986 &   0.1092 &   0.1154 &   0.1222 &   0.1281 \\     
 & &2 & 0.0234 &   0.0356 &   0.0466 &   0.0566 &   0.0655 &   0.0748 &   0.0807 &   0.0873 &   0.0933 \\     
 & &3 & 0.0158 &   0.0247 &   0.0332 &   0.0411 &   0.0485 &   0.0563 &   0.0617 &   0.0676 &   0.0730 \\     
 & &4 & 0.0118 &   0.0187 &   0.0254 &   0.0320 &   0.0382 &   0.0448 &   0.0495 &   0.0548 &   0.0597 \\     
 & &5 & 0.0092 &   0.0148 &   0.0204 &   0.0259 &   0.0312 &   0.0370 &   0.0411 &   0.0458 &   0.0502 \\ \hline
10 &$3\%$ &1 & 0.0685 &   0.0947 &   0.1150 &   0.1312 &   0.1442 &   0.1599 &   0.1640 &   0.1718 &   0.1785 \\   
 & &2 & 0.0405 &   0.0599 &   0.0765 &   0.0907 &   0.1029 &   0.1175 &   0.1226 &   0.1309 &   0.1382 \\   
 & &3 & 0.0282 &   0.0432 &   0.0568 &   0.0690 &   0.0798 &   0.0929 &   0.0980 &   0.1060 &   0.1131 \\   
 & &4 & 0.0213 &   0.0334 &   0.0448 &   0.0552 &   0.0648 &   0.0765 &   0.0814 &   0.0888 &   0.0956 \\   
 & &5 & 0.0170 &   0.0270 &   0.0367 &   0.0458 &   0.0542 &   0.0647 &   0.0694 &   0.0762 &   0.0826 \\ \cline{2-12} 
 &$5\%$ &1 & 0.0582 &   0.0829 &   0.1028 &   0.1191 &   0.1325 &   0.1480 &   0.1532 &   0.1615 &   0.1687 \\   
 & &2 & 0.0354 &   0.0533 &   0.0691 &   0.0829 &   0.0949 &   0.1091 &   0.1148 &   0.1232 &   0.1307 \\   
 & &3 & 0.0252 &   0.0391 &   0.0519 &   0.0635 &   0.0740 &   0.0866 &   0.0921 &   0.1000 &   0.1072 \\   
 & &4 & 0.0194 &   0.0306 &   0.0413 &   0.0513 &   0.0605 &   0.0717 &   0.0768 &   0.0841 &   0.0908 \\   
 & &5 & 0.0156 &   0.0250 &   0.0341 &   0.0427 &   0.0509 &   0.0609 &   0.0656 &   0.0723 &   0.0786 \\ \hline 
\end{tabular}

\end{table}
}

{
\begin{table}[!hbt]
\centering
\caption{\label{tab:b0_Pareto}Maximal level of retention $b$, for which the ruin probability does not exceed $0.05$ for Pareto distribution.}

\smallskip

\begin{tabular}{|*{7}{c|}}
\hline 
\multicolumn{2}{|c|}{Initial capital $u$} & 1 & 2 & 3 & 4 & 5 \\
\hline\hline 
$n=5$ & $i_s=3\%$ & 0.2190 & 0.4052 & 0.5907 & 0.7696 & 0.9621 \\
      & $i_s=5\%$ & 0.2468 & 0.4379 & 0.6209 & 0.8133 & 0.9996  \\\hline
$n=10$ & $i_s=3\%$ & lack  & 0.2567 & 0.3582 & 0.4588 & 0.5588  \\
       & $i_s=5\%$ & lack & 0.2884 & 0.3933 & 0.4958 & 0.5974 \\\hline
\end{tabular}

\end{table}
}

Table~\ref{tab:ruin_Pareto} gives the same conclusion as for exponential distribution.
Word ``lack'' in Table~\ref{tab:b0_Pareto} means that for any level of retention $b\in\(0.2,1\right]$ with initial capital $u=1$, the ruin probability exceeds $0.05$ both for a five-years-time horizon and for a ten-year-time horizon.
In Figure~\ref{fig:upU2} graphs of $\Psi_n^b\(u,i_s\)$ for $n=5$ and $n=10$ for Pareto distribution were depicted for  $i_2=0.05$. In Figure~\ref{fig:up2} graphs of $\Psi_n^b\(u,i_s\)$ for $n=5$ and $n=10$, for $u=1,\,2,\,3,\,4,\,5$ and $i_2=0.05$. Graphs for $i_1=0.03$ are almost the same so we omit them. The differences are easy to observe in Table~\ref{tab:ruin_Pareto}.

\begin{figure}[!ht]
\centering
\includegraphics{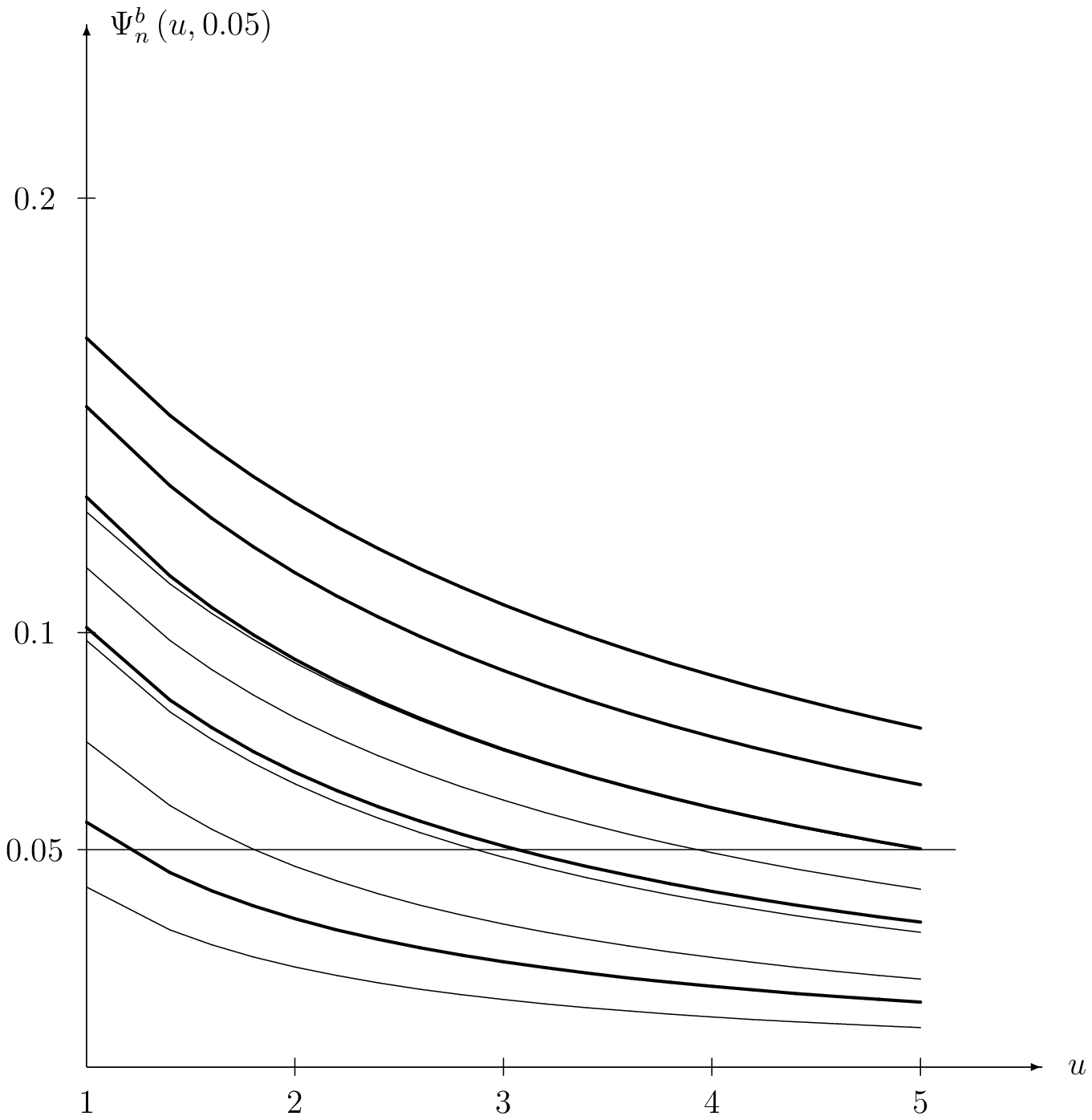}

\vspace{2ex}
\caption{\label{fig:upU2}Ruin probability for Pareto distribution as a function of~$u$. $\Psi_5^b\(u,0.05\)$ -- thin line, $\Psi_{10}^b\(u,0.05\)$ -- thick line, from the lowest do the highest for $b=0.2,\,0.4\,,0.6\,,0.8,\,1.0$ respectively.}
\end{figure}

\begin{figure}[!htb]
\centering
\includegraphics{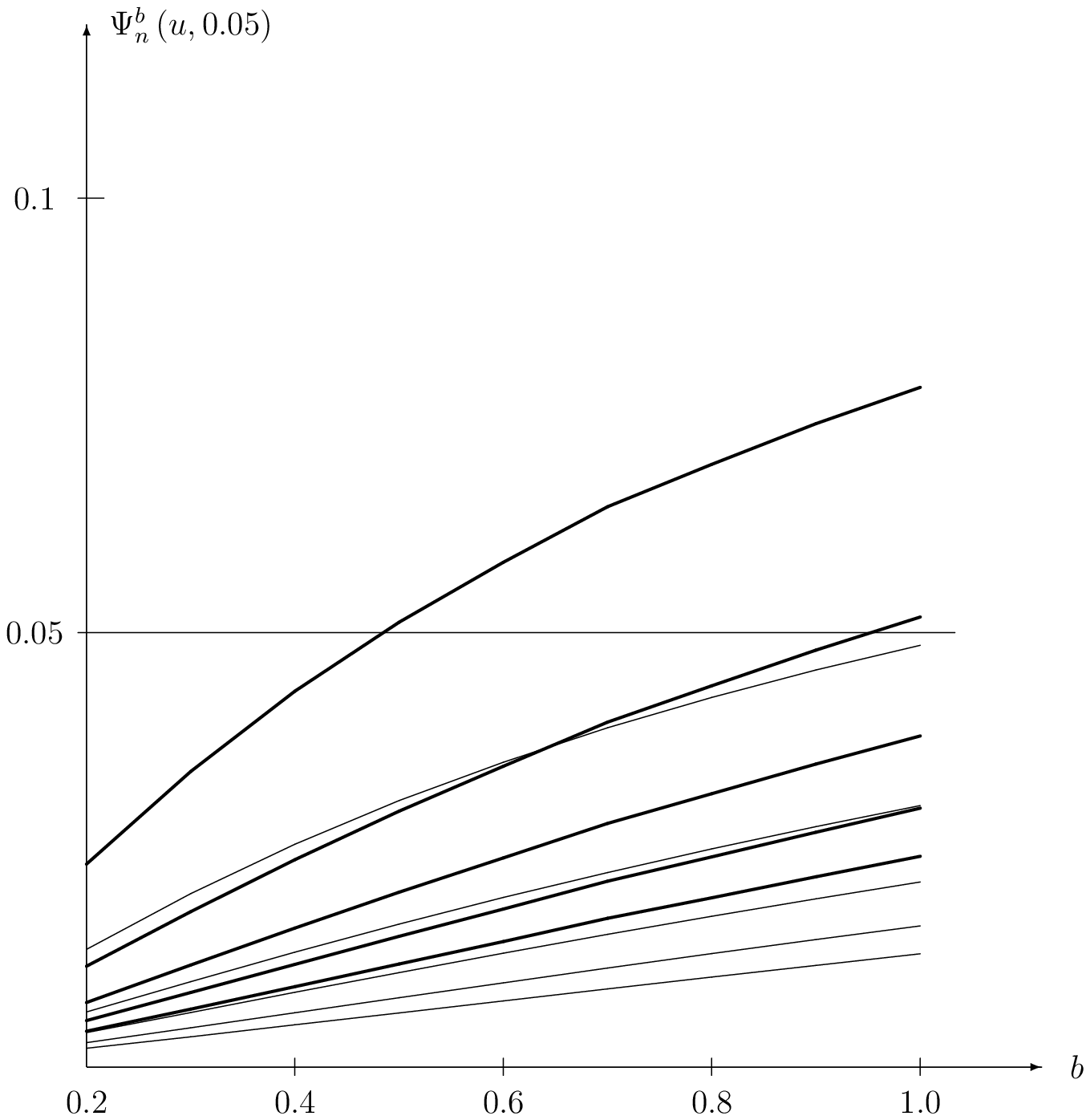}

\vspace{2ex}
\caption{\label{fig:up2}Ruin probability for Pareto distribution as a function of~$b$. $\Psi_5^b\(u,0.05\)$ -- thin line, $\Psi_{10}^b\(u,0.05\)$ -- thick line, from the highest do the lowest for $1,2,3,4,5$ respectively.}
\end{figure}

Taking an advantage from Theorem~\ref{thm:est-up2} we will present the results concerning an approximation of ruin probability for Pareto distribution. In Figure~\ref{fig:apro} the ratio
\[
\frac{\Psi_n^b\(u,i_s\)}{c_n\(i_s\)\bar{V}\(u\)}
\]
for~$n=3$, $b=0.2,0.4,\dots,1.0$ and $0\leq u\leq 20$ was depicted.

\begin{figure}[!ht]
\centering
\includegraphics[width=16cm]{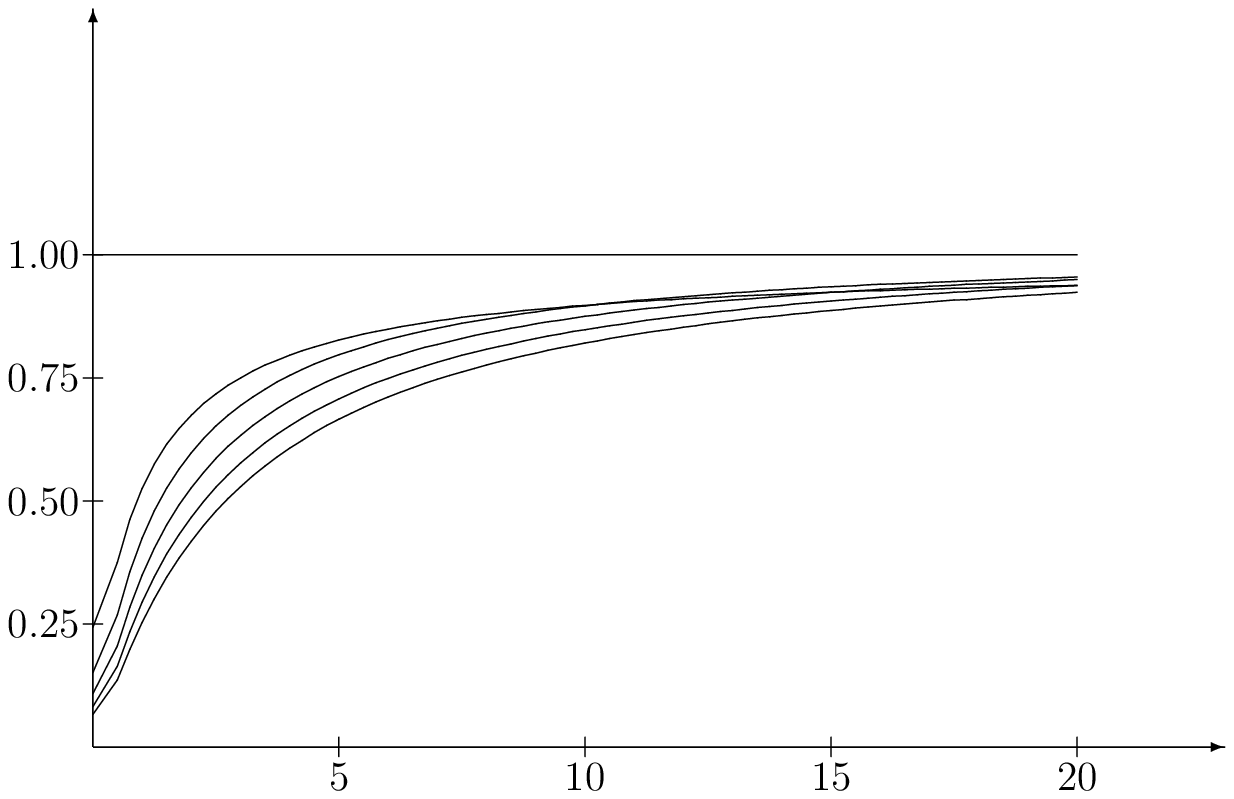}
\caption{\label{fig:apro}Asymptotic approximation of the ruin probability for Pareto distribution --
graphs
${\Psi_n^b\(u,i_s\)}/{c_n\(i_s\)\bar{V}\(u\)}$.
From the highest do the lowest for $b=0.2,\,0.4,\,0.6,\,0.8,\,1.0$ respectively}

\bigskip
\end{figure}

\section{Conclusions}
\label{s:concl}

In the continuous risk process the optimal level of retention can be determined by maximising of an adjustment coefficient relative to the level of retention.
In the discrete risk process the above statement is not true.

For the fixed initial capital $u\geq 1$ the probability of ruin is an increasing function of the retention level $b$. Therefore the probability of the ruin is minimal if the retention level is minimal. It means that an insurer retains only very low losses which causes very low income and is very unfavourable for him. It seems that the right approach relies on fixing an acceptable level of the ruin probability, and appropriately to this probability, determining the retention level.

If loading of a reinsurer is greater than loading of an insurer ($\xi>\theta$), the adjustment coefficient is not a convex function, which lowers the quality of upper estimation.
Basing on our numerical examples we conclude that such an upper bound is very imprecise, and basically it is worthless.
For the heavy tailed claims we give the theorem about the approximation of the ruin probability if the initial capital is sufficiently large. The example of Pareto distribution shows that such an approximation is appropriate and quickly tends to the limit value.

\newpage
\section*{Acknowledgements}
The research by Helena Jasiulewicz was supported by a grant from the National Science Centre, Poland.

\end{document}